\newcommand*{\cB}{\mathcal{B}}
\newcommand*{\cC}{\mathcal{C}}
\newcommand*{\cE}{\mathcal{E}}
\newcommand*{\cF}{\mathcal{F}}
\newcommand*{\cH}{\mathcal{H}}
\newcommand*{\cK}{\mathcal{K}}
\newcommand*{\cN}{\mathcal{N}}
\newcommand*{\cP}{\mathcal{P}}
\newcommand*{\cO}{\mathcal{O}}
\newcommand*{\cS}{\mathcal{S}}
\newcommand*{\cU}{\mathcal{U}}
\newcommand*{\bbR}{\mathbb{R}}
\newcommand*{\bbC}{\mathbb{C}}
\newcommand*{\bbE}{\mathbb{E}}
\newcommand*{\bbP}{\mathbb{P}}
\newcommand*{\bbI}{\mathbb{I}}
\newcommand*{\sph}{\cS^{2d-1}}
\newcommand{\eps}{\varepsilon}
\newcommand{\vcd}{\emph{VC-dim}(\cF)}
\newcommand{\vcc}{\emph{VC-dim}(\cC)}
\newcommand{\tr}{\textnormal{tr}}
\newcommand{\ket}[1]{| #1 \rangle}
\newcommand{\inn}[2]{\langle #1 | #2 \rangle}
\theoremstyle{plain}
\newtheorem{lemma}{Lemma}
\newtheorem{theorem}[lemma]{Theorem}
\theoremstyle{definition}
\newtheorem{definition}{Definition}
\begin{document}
%%%%%%%%%%%%%%%%%%%%%%%%%%%%%%%%%%%%%%%%%%%%%%%%%%%%%%%%%%%%%%%%%%%%%%%%%%
%                                                                        %
%                                 Title                                  %
%                                                                        %
%%%%%%%%%%%%%%%%%%%%%%%%%%%%%%%%%%%%%%%%%%%%%%%%%%%%%%%%%%%%%%%%%%%%%%%%%%
\title{Discretization of quantum pure states and local random unitary channel}

\author{Dong Pyo Chi}%\email{dpchi@math.snu.ac.kr}
\affiliation{
 Department of Mathematical Sciences,
 Seoul National University, Seoul 151-742, Korea
}
\author{Kabgyun Jeong}%\email{kgjeong6@snu.ac.kr}
\affiliation{
 Nano Systems Institute (NSI-NCRC),
 Seoul National University, Seoul 151-742, Korea
}

\date{\today}

%%%%%%%%%%%%%%%%%%%%%%%%%%%%%%%%%%%%%%%%%%%%%%%%%%%%%%%%%%%%%%%%%%%%%%%%%%
%                                                                        %
%                              Abstract                                  %
%                                                                        %
%%%%%%%%%%%%%%%%%%%%%%%%%%%%%%%%%%%%%%%%%%%%%%%%%%%%%%%%%%%%%%%%%%%%%%%%%%
\begin{abstract}
We show that a quantum channel $\cN$ constructed by averaging over
$\cO(\log d/\eps^2)$ randomly chosen unitaries gives a local
$\eps$-randomizing map with non-negative probability. The idea comes
from a small $\eps$-net construction on the higher dimensional
unit sphere or quantum pure states.
By exploiting the net, we analyze the concentrative phenomenon
of an output reduced density matrix of the channel,
and this analysis imply that there exists
a local random unitary channel, with relatively small unitaries, generically.
\end{abstract}

\pacs{
%03.65.Ud, % Entanglement and quantum non-locality
03.65.Ta
%03.67.Mn  % Entanglement production, characterization and manipulation
03.67.Hk
}
%\keywords{}
\maketitle

%%%%%%%%%%%%%%%%%%%%%%%%%%%%%%%%%%%%%%%%%%%%%%%%%%%%%%%%%%%%%%%%%%%%%%
%%%                                                                %%%
%%%                         Introduction                           %%%
%%%                                                                %%%
%%%%%%%%%%%%%%%%%%%%%%%%%%%%%%%%%%%%%%%%%%%%%%%%%%%%%%%%%%%%%%%%%%%%%%
\section{Introduction}

The probabilistic existence of an $\eps$-randomizing map
or random unitary channel, with small cardinality of unitaries,
has several important implications. The channel can be used to
construct almost perfectly secure encryption protocols~\cite{AMTW00, HLSW04}
and give an intuition such as counterexample to additivity conjecture for
the classical capacity of quantum channel~\cite{HW08, CHLMW08, Has09}.

In the paper, we prove that there exists a quantum channel
consisting of unitary matrices
with relatively small cardinality $\cO(\log d/\eps^2)$
which is also $\eps$-randomizing.
This construction deeply relies on the mathematical fact
known as general $\eps$-net theorem,
in special, we consider a \emph{higher dimensional unit sphere}
corresponding to $d$($\gg 1$) dimensional quantum pure states.

For convenience, we use the following notations throughout the paper.
A \emph {state} can be pure or mixed state on the Hilbert spaces.
Especially, a density matrix of the pure state $\ket\varphi$
will be denoted as $\varphi$, when it is without confusing a mixed state.
If $\varphi_{AB}$ is a composite quantum state of
$\bbC^{d_Ad_B}~(\equiv\bbC^{d_A}\otimes\bbC^{d_B})$, or simply $A\otimes B$,
the reduced state on $A$ can be referred to $\varphi_A$.
Given Hilbert space $\cH(\bbC^d)$, $\cB(\bbC^d)$ denotes
the algebra of complex $d\times d$ matrices,
$\cU(d)$ be the unitary group on the space,
and $\bbI$ is $d\times d$ identity matrix.
The notation $\bbP[X]$ and $\bbE[X]$ denote the probability and
the expectation value
of a given random variable $X$, respectively. Finally,
any functions $\log$ and $\exp$ are always taken base 2.

\subsection{Local random unitary channel}
\label{sec:lruc}

A \emph{quantum channel} is a completely positive trace-preserving~(CPT) map
$\cN :\cB(\bbC^{d_A})\to\cB(\bbC^{d_B})$.
Given CPT map $\cN$, it is known that there is a complementary or conjugate channel
$\cN^C:\cB(\bbC^{d_A})\to\cB(\bbC^{d_E})$\cite{Hol05, KMNR07}.
For any input $\varphi_A$, these two channels
$\cN(\varphi_A)$ and $\cN^C(\varphi_A)$ are related by
\begin{equation}
\cN(\varphi_A)=\tr_{E} V\varphi_A V^\dagger, ~~~~\cN^C(\varphi_A)=\tr_{B} V\varphi_A V^\dagger
\end{equation}
where $V:\bbC^{d_A}\to\bbC^{d_Bd_E}$ is a unitary embedding.

We now introduce general notion of the random unitary channel.
For any input quantum states $\varphi_A$,
the \emph{random unitary channel}~(RUC)
$\cN :\cB(\bbC^{d_A})\to\cB(\bbC^{d_B})$ can be described by
\begin{equation*}
  \cN(\varphi_A)=\sum_{i=1}^{d_E} \omega_i U_i \varphi_A U_i^\dagger,
\end{equation*}
where the weights, $\omega_1, \ldots, \omega_{d_E}$, are positive values
such that $\sum_i^{d_E}\omega_i=1$ and the operators $U_1, \ldots, U_{d_E}$ are
some unitary $d_A\times d_A$ matrices. When the positive weights are
all equal to $1/{d_E}$, RUC will be written as
$\cN(\varphi_A)=\sum_{i=1}^{d_E} \frac{1}{d_E} U_i \varphi_A U_i^\dagger$.
In this place, a map $\cN$ is called $\eps$-\emph{randomizing} if, for all input $\varphi_A$,
\begin{equation*}
  \left\|\cN(\varphi_A)-\frac{\bbI}{d_B}\right\|_\infty \leq \frac{\eps}{d_B},
\end{equation*}
where $\bbI$ be $d_B\times d_B$ identity matrix and $\eps$ is a
small positive number upper bounded by 1. The operator norm
$\|\rho\|_\infty$ of any $\rho$ can be taken to be the square root of the
largest eigenvalue of $\rho^\dagger\rho$. That is, we call a CPT map $\cN$ as
random unitary channel, if, for all inputs $\varphi_A$, the map
$\cN(\varphi_A)$ is $\eps$-randomizing.
In sense of conjugate channel,
$\left\|\cN^C(\varphi_A)-\bbI/{d_E}\right\|_\infty \leq \eps/{d_E}$
also can be defined as an $\eps$-randomizing map.

For future works, we need some extended notions
concerning to the random unitary channel and $\eps$-randomizing.
\begin{definition}\label{def:lruc}
    Assume that $\cN :\cB(\bbC^{d_A})\to\cB(\bbC^{d_B})$ is a CPT map.
    For all input states $\varphi_A$, if
    \begin{equation}
    \left\|\cN(\varphi_A)\right\|_\infty-\frac{1}{d_B} \leq \frac{\eps}{d_B}
    ~~~~\mathrm{and}~~~~
    \left\|\cN^C(\varphi_A)\right\|_\infty-\frac{1}{d_E} \leq \frac{\eps}{d_E},
    \end{equation}
    then $\cN$ is called a \emph{local} $\eps$-\emph{randomizing}.
\end{definition}
For sufficiently large $d_A\gg d_E$, generically
the output states of the channel are distributed as
\begin{equation} \label{eq:chnloutput}
\cN(\varphi_A)\simeq
\begin{pmatrix}
\frac{1}{d_B} &    0   &   {}           &  \cdots & {}     &  0\\
0             & \ddots &   {}           &  {}     & {}     & {}\\
{}            &   {}   &  \frac{1}{d_B} &  {}     & {}     & {}\\
\vdots        &   {}   &   {}           &  0      & {}     & \vdots\\
{}            &   {}   &   {}           &  {}     & \ddots & {}\\
0             &   {}   &   \cdots       &  {}     & {}     & 0
\end{pmatrix}
~~~~\mathrm{and}~~~~
\cN^C(\varphi_A)\simeq
\begin{pmatrix}
\frac{1}{d_E} &   {}   &   0           \\
{}            & \ddots &   {}           \\
0             &   {}   &  \frac{1}{d_E}
\end{pmatrix},
\end{equation}
as well as $\frac{1}{d_B}$ is almost equal to $\frac{1}{d_E}$.
Naturally, one can take the definition of \emph{local random unitary channel}
from the local $\eps$-randomizing.
In this case, note that any output states of the channel
$\cN(\varphi_A)$ do not need to having full rank;
it should be a partially randomized states, for example, $\cN(\varphi_A)$
in Eq.~(\ref{eq:chnloutput}).

Recently, it was shown that, for all $\eps\in(0,1]$, $\eps$-randomizing maps exist
in sufficiently large dimension $d_A$ such that $d_E$ can be taken to be
$\cO(d_A\log d_A/\eps^2)$ in~\cite{HLSW04} and $\cO(d_A/\eps^2)$ in~\cite{Aub09}
for the Haar distributed $U_i$, respectively. The proof of the theorems is based on
a large deviation technique
and discretization of quantum pure states via
$\eps$-net construction~\cite{BHLSW05, HLSW04}.
In this paper, we show that there is a small set of local random unitary channel
with cardinality $\cO(\log d_A/\eps^2)$ only, which is local $\eps$-randomizing.

\begin{theorem}\label{thm:slruc}
Let all $\eps\in(0, \frac{1}{2}]$ and $d_A$ is sufficiently large.
Let $\{U_i: 1\leq i \leq d_E\}$ with $d_E=\cO(\log d_A/\eps^2)$
be i.i.d. random unitaries
distributed according to the Haar measure on $\cU(d_A)$. Then the quantum channel
$\cN(\varphi_A)=\frac{1}{d_E}\sum_i^{d_E}U_i\varphi_A U_i^\dagger$ is a
local $\eps$-randomizing map with non-negative probability.
\end{theorem}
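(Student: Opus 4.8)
\medskip\noindent\emph{Sketch of the proof.}\ The plan is to run the probabilistic method in the style of the earlier $\eps$-randomizing constructions of Hayden--Leung--Shor--Winter and Aubrun: turn the ``for all inputs'' requirement into a finite union bound over an $\eps$-net of pure states, control each term by a concentration estimate for Haar-random vectors, and then take $d_E$ as small as the union bound allows.

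First I would reduce the two conditions of Definition~\ref{def:lruc} to one. For a pure input $\varphi_A=\proj\psi\psi$ the operators $\cN(\varphi_A)=\tr_E V\varphi_A V^\dagger$ and $\cN^C(\varphi_A)=\tr_B V\varphi_A V^\dagger$ are the two marginals of the pure bipartite state $V\proj\psi\psi V^\dagger$, hence isospectral, and a general input reduces to pure ones by convexity. Since for $d_A\gg d_E$ the vectors $U_1\ket\psi,\dots,U_{d_E}\ket\psi$ are almost surely linearly independent, the relevant output rank $d_B$ equals $d_E$, and it is enough to show that with positive probability, for every unit vector $\ket\psi$,
\begin{equation*}
\cN^C(\varphi_A)=\frac{1}{d_E}\big(\bbI+E_\psi\big),\qquad (E_\psi)_{ij}=\langle\psi|U_j^\dagger U_i|\psi\rangle\ \ (i\neq j),\qquad \|E_\psi\|_\infty\le\eps ,
\end{equation*}
i.e.\ that the vectors $U_i\ket\psi$ are $\eps$-nearly orthonormal; this is the ``concentrative phenomenon of the output reduced density matrix'' referred to above, and by isospectrality it also gives $\|\cN(\varphi_A)\|_\infty\le(1+\eps)/d_E$.

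Next I would fix $\ket\psi$ and invoke concentration of measure. As the $U_i$ are i.i.d.\ Haar, the vectors $v_i:=U_i\ket\psi$ are i.i.d.\ uniform on the sphere of pure states, with $\bbE\langle v_i|v_j\rangle=0$ and $\bbE|\langle v_i|v_j\rangle|^2=1/d_A$ for $i\neq j$, so L\'evy's lemma gives $\bbP[\,|\langle v_i|v_j\rangle|>t\,]\le\exp(-\Omega(d_A t^2))$. Bounding $\|E_\psi\|_\infty\le\|E_\psi\|_2\le\big(\sum_{i\neq j}|\langle v_i|v_j\rangle|^2\big)^{1/2}$ and applying the Gaussian-type tail on the product of spheres --- equivalently a matrix Chernoff/Bernstein estimate for $\sum_i\proj{v_i}{v_i}$, or a net over output directions $\ket x$ with a scalar Bernstein bound for $\tfrac1{d_E}\sum_i|\langle x|U_i|\psi\rangle|^2$ --- yields $\bbP[\|E_\psi\|_\infty>\eps/2]\le\exp(-c\,d_A\eps^2/d_E)$, valid once $d_A$ is large enough that $\bbE\|E_\psi\|_2\lesssim d_E/\sqrt{d_A}$ lies below $\eps/4$. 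To make this uniform over $\ket\psi$ I would take an $\eta$-net $\cM$ of the pure states with $\eta\asymp\eps/d_E$ and $|\cM|\le(1+2/\eta)^{2d_A}$ (the net construction advertised in the introduction), use the Lipschitz estimate $\big|\,\|E_\psi\|_\infty-\|E_{\psi'}\|_\infty\,\big|\le 2d_E\,\|\ket\psi-\ket{\psi'}\|$, and union bound:
\begin{equation*}
\bbP\big[\,\exists\,\ket\psi\in\cM:\ \|E_\psi\|_\infty>\eps/2\,\big]\ \le\ (1+2/\eta)^{2d_A}\,\exp\!\big(-c\,d_A\eps^2/d_E\big).
\end{equation*}
Choosing $d_E$ of order $\log d_A/\eps^2$ (tracking the logarithmic factors) makes this strictly smaller than $1$, so the event that $\cN$ is local $\eps$-randomizing has positive probability, which is the claim.

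The main obstacle is this last balancing act. The net of pure states has cardinality $\exp(\Theta(d_A\log(1/\eps)))$, so the per-point failure probability must decay like $\exp(-\Omega(d_A\,\mathrm{poly}(\eps)))$ with essentially the correct power of $\eps$ and at most a polylogarithmic loss in $d_A$; and since the concentration rate above already carries a $1/d_E$ penalty, the estimate on the largest eigenvalue of $\sum_i\proj{v_i}{v_i}$ (equivalently on $\sup_{\ket x}\sum_i|\langle x|U_i|\psi\rangle|^2$) must be sharp enough for the two exponents to be reconciled with only $d_E=\cO(\log d_A/\eps^2)$ unitaries. A crude Markov or Chebyshev bound is too weak here: it is precisely the sharp sub-Gaussian concentration on the sphere of pure states --- the discretization in the title --- that has to carry the argument, and aligning the $d_A$-, $d_E$- and $\eps$-dependences is where the content of the theorem lies.
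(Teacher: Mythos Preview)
Your overall probabilistic-method outline is sound and shares its skeleton with the paper, but you diverge at the one place that drives the cardinality bound: the net. You reach for the standard metric covering net of size $(1+2/\eta)^{2d_A}$ on input states (and mention, as an alternative, the same kind of net over output directions $\ket x$), whereas the paper's central device is Theorem~\ref{thm:epsNetSph}: a Haussler--Welzl/Matou\v{s}ek style $\eps$-net for the system of caps on $\cS^{2d_B-1}$, built from the VC-dimension bound $2d+1$, with cardinality only $\cO(d_B\eps^{-1}\log\eps^{-1})$, i.e.\ \emph{polynomial} in $d_B$. The paper applies this small net on the \emph{output} side to discretize $\|\cN(\varphi_A)\|_\infty=\sup_{\varphi_B}\tr\big(\varphi_B\,\cN(\varphi_A)\big)$ in Lemma~\ref{lem:concRDM}, pairs it with the HLSW large-deviation bound of Lemma~\ref{lem:LDE} whose rate is $e^{-c\,d_E\eps^2}$, and balances: $d_E\eps^2\gtrsim\log(d_B^2/\eps)$ gives exactly $d_E=\cO(\log d_A/\eps^2)$.

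With an exponential net your own balance does not close, and this is not a matter of sharpening constants. Your displayed inequality asks for $2d_A\log(1+2/\eta)<c\,d_A\eps^2/d_E$; the $d_A$ cancels, and with $\eta\asymp\eps/d_E$ you are left with $d_E\log(d_E/\eps)\lesssim\eps^2$, which fails for every $d_E\ge1$ once $\eps\le\tfrac12$. So no choice of $d_E$ --- in particular not $\cO(\log d_A/\eps^2)$ --- makes your union bound $<1$. The obstacle you correctly flag in your last paragraph is resolved in the paper precisely by replacing the $e^{\Theta(d_A)}$-sized covering net with the polynomial-size VC net of Theorem~\ref{thm:epsNetSph}, after which the much softer concentration rate $e^{-c\,d_E\eps^2}$ already suffices.
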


This construction deeply relies on the fact of discrete geometry known as
general $\eps$-net theorem~\cite{Mat02}, especially we consider an object
such as higher dimensional unit sphere $\cS^{2d_A-1}$,
it is corresponding to $d_A$ dimensional quantum pure states.
By some properties of an entangled random subspace,
the local $\eps$-randomizing map has its quantum informational meaning.

\subsection{Entangled random states}
\label{sec:entransub}

As mentioned above, let's consider the unitary embedding
$V:\bbC^{d_A}\to\bbC^{d_Bd_E}$.
Let's define $\varphi_{BE}\in\bbC^{d_Bd_E}$ be a higher dimensional
bipartite state.
Especially, a set $\cP(\bbC^{d_Bd_E})$ denotes
the set of all bipartite pure states lying on $\bbC^{d_Bd_E}$.
For the pure state $\ket{\varphi}_{BE}$,
it is known that there exists a unique, and unitarily
invariant, uniform distribution $\mu_h$, which is given by the Haar measure
on the unitary group $\cU(d_Bd_E)$.
Also there is a uniform measure for its subspaces $\bbC^{d_S}\subset \bbC^{d_Bd_E}$
that is unitarily invariant.
A \emph{random pure state} $\ket{\varphi}_{BE}$ is defined as $\ket{\varphi}_{BE}$ is any
random variable drawn by $\mu_h$ on $\cP(\bbC^{d_Bd_E})$.
Similarly a \emph{random pure sub-states} %(or rank-$s$ random states)
$\ket{\varphi}_S$ with dimension $s$ is any pure states induced
by unitarily invariant measure
on $\cP(\bbC^{s})\subset \bbC^{d_Bd_E}~$\cite{HLW06}.

Page's conjecture~\cite{LP88, Pag93} states that the average von Neumann entropy of
$\tr_E(\varphi_{BE})$ has almost all maximum value, that is,
any random pure states are near-maximally entangled state
on a bipartite higher dimensional space~\cite{FK94, San95, Sen96}.
For $\ket{\varphi}_S\in \cP(\bbC^s)$,
it is also near-maximally entangled state~\cite{HLW06}:

\begin{theorem}[Entangled Subspaces]
\label{thm:hlw06}
Let $\bbC^{d_Bd_E}$ be a bipartite system with dimension $d_Bd_E$
$(d_B\geq d_E\geq 3)$ and $0<\alpha<\log d_B$.
Then, with high probability, there exists a subspace $\cP(\bbC^s)\subset \bbC^{d_Bd_E}$
of dimension
$s=\cO\left(d_Bd_E\left(\frac{\alpha}{\log d_B}\right)^{5/2}\right)$
s.t. all states $\ket\varphi_{S}\in \bbC^s$ have entanglement at least
\begin{equation}
 E(\varphi_{S})=S(\varphi_B)\geq\log d_B-\alpha-\frac{1}{\ln2}\frac{d_B}{d_E},
\end{equation}
where $S(\varphi_B)$ is von Neumann entropy of $\varphi_B$.
\end{theorem}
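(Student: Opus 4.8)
\emph{Proof proposal.}\ I would establish Theorem~\ref{thm:hlw06} by the probabilistic method: show that a Haar-random $s$-dimensional subspace of $\bbC^{d_Bd_E}$ works with probability tending to $1$, so in particular one such subspace exists. Write $N=d_Bd_E$, and for a unit vector $\ket\varphi\in\bbC^{d_Bd_E}$ put $\varphi_B=\tr_E\proj\varphi\varphi$; since $E(\varphi_S)=S(\varphi_B)$, the job is to lower-bound $S(\varphi_B)$ \emph{simultaneously over every} unit vector of the chosen subspace. The target is the average: for a single Haar-random $\ket\varphi$ the mean $\bbE\,S(\varphi_B)$ falls short of $\log d_B$ by at most $\tfrac1{\ln 2}\tfrac{d_B}{d_E}$, which is precisely the correction in the statement, and $\alpha$ will be the slack absorbed by concentration. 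For the painless (slightly lossy) route one can even replace $S$ by the R\'enyi-$2$ lower bound $S(\rho)\ge-\log\tr\rho^2$: since $\bbE[\tr\varphi_B^2]=\tfrac{d_B+d_E}{d_Bd_E+1}\le\tfrac1{d_B}+\tfrac1{d_E}$ and $-\log(\tfrac1{d_B}+\tfrac1{d_E})\ge\log d_B-\tfrac1{\ln 2}\tfrac{d_B}{d_E}$, it then suffices to force $\tr\varphi_B^2\le 2^{\alpha}\big(\tfrac1{d_B}+\tfrac1{d_E}\big)$ everywhere on the subspace.

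The first ingredient is single-copy concentration. The functional $f(\ket\varphi)=\|\varphi_B\|_2=\sqrt{\tr\varphi_B^2}$ is Lipschitz on $\cS^{2N-1}$ with an absolute constant, because $\bigl|\,\|\varphi_B\|_2-\|\psi_B\|_2\,\bigr|\le\|\varphi_B-\psi_B\|_2\le\|\varphi_B-\psi_B\|_1\le\|\proj\varphi\varphi-\proj\psi\psi\|_1\le 2\|\ket\varphi-\ket\psi\|$, using that the partial trace contracts the trace norm. L\'evy's lemma therefore gives $\bbP\bigl[f(\ket\varphi)>\bbE f+t\bigr]\le\exp(-cNt^2)$ for an absolute $c>0$ (and the analogous bound holds with $S(\varphi_B)$ in place of $f$, modulo the subtlety addressed in the last paragraph).

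The second ingredient is a net-and-union-bound argument on the subspace. Fix a reference $s$-dimensional subspace $V_0\subset\bbC^{N}$, take a $\delta$-net $\cM$ on its unit sphere with $|\cM|\le(5/\delta)^{2s}$, and apply a Haar-random $U\in\cU(N)$: then $UV_0$ is a uniformly random $s$-dimensional subspace, $\{U\ket x:\ket x\in\cM\}$ is a $\delta$-net on its unit sphere, and each fixed $U\ket x$ is Haar-random on $\cS^{2N-1}$, so the previous paragraph applies to it. A union bound over $\cM$ shows that, outside an event of probability $\le(5/\delta)^{2s}\exp(-cNt^2)$, every net vector satisfies $f(U\ket x)\le\bbE f+t$; Lipschitz continuity extends this to \emph{all} unit vectors $\ket\varphi\in UV_0$ with an extra additive $2\delta$, i.e. $f(\ket\varphi)\le\bbE f+t+2\delta$. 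Taking $t$ and $\delta$ a suitable small multiple of $\alpha\,(\tfrac1{d_B}+\tfrac1{d_E})^{1/2}$ makes the right-hand side at most $2^{\alpha/2}(\tfrac1{d_B}+\tfrac1{d_E})^{1/2}$, hence $\tr\varphi_B^2=f(\ket\varphi)^2\le 2^{\alpha}(\tfrac1{d_B}+\tfrac1{d_E})$ on the whole subspace, which yields the claimed entropy bound. The failure probability is $<1$ (indeed $\to 0$) as soon as $2s\log(5/\delta)<cNt^2$ --- that is, as soon as $s$ is at most a quantity of the shape in the theorem, $d_Bd_E$ times a fixed positive power of $\alpha/\log d_B$.

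The step I expect to be the real obstacle is obtaining the \emph{sharp} leading term, i.e. working with $S(\varphi_B)$ itself rather than the lossy $-\log\tr\varphi_B^2$. The difficulty is that $\ket\varphi\mapsto S(\varphi_B)$ is not Lipschitz on the sphere: near rank-deficient reduced states its modulus of continuity acquires a $\log(1/\cdot)$ factor (Fannes: $|S(\rho)-S(\sigma)|\le T\log d_B+h(T)$, $T=\tfrac12\|\rho-\sigma\|_1$, $h$ the binary entropy), so L\'evy's lemma does not apply to it verbatim. The standard cure is two-stage: first use the purity estimate above to pass to a random subspace on which every reduced state is already bounded away from the boundary of state space (spectrum $\lesssim\mathrm{poly}(\log d_B)/d_B$), an event on which $S(\cdot)$ \emph{does} act Lipschitzly with constant $\cO(\log d_B)$; then re-run the net argument with $S(\varphi_B)$. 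Balancing the net fineness (one needs roughly $\delta\asymp\alpha/\log d_B$, so that the Lipschitz error $\cO(\delta\log d_B)$ stays within the budget $\alpha$) against the entropy concentration $\exp\!\big(-cN(\alpha/\log d_B)^2\big)$, and paying the dimensional price of the first stage, is what drives the subspace dimension down to $s=\cO\!\big(d_Bd_E(\alpha/\log d_B)^{5/2}\big)$, the exponent $5/2$ emerging from this optimization.
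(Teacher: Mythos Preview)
The paper does not prove Theorem~\ref{thm:hlw06} at all: it is quoted verbatim from Hayden--Leung--Winter~\cite{HLW06} as a known background result, with no argument given here. Your sketch is in fact a faithful outline of the original HLW06 proof --- L\'evy concentration for a Lipschitz functional of a Haar-random vector, a $\delta$-net on the unit sphere of an $s$-dimensional reference subspace pushed forward by a Haar unitary, union bound over the net, and the two-stage workaround for the non-Lipschitz behaviour of $S(\cdot)$ (first control the spectrum of $\varphi_B$ via the purity, then run the net argument for the entropy on the good event where $S$ is effectively Lipschitz with constant $\cO(\log d_B)$). The exponent $5/2$ indeed arises from exactly the balancing you describe. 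So there is nothing to compare against in \emph{this} paper; your proposal matches the source it cites.
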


Note that the dimension $s$ of $S$ is surely less than
the total dimension $d_Bd_E$ of $\bbC^{d_Bd_E}$ as well as
$\varphi_S\in \cP(\bbC^s)$ is almost entangled state.
Recall that all bipartite quantum pure state can be written
in \emph{Schmidt decomposition} form,
$\ket\varphi_{BE}=\sum_{i=1}^{\min\{d_B,d_E\}}
\sqrt{\lambda_i}\ket{e_i}_B\ket{f_i}_E$, where
$_B\inn{e_i}{e_j}_B=\delta_{ij}= {}_E\inn{f_i}{f_j}_E$
and $\sqrt{\lambda_i}$ is the Schmidt coefficients,
furthermore, if $\lambda_i$ are all equal, then
it is maximally entangled state.% in the decomposition.

As mentioned above, a local random unitary channel may induce
a partially randomized quantum state $\varphi'_E$ with some low-rank
less than $\varphi_B$ (having full-rank) of RUC's output.
Imagine a purification of near-maximally mixed state $\varphi_B$,
resulting state $\ket{\varphi}_{BE}$ will be a maximally entangled state
on $\bbC^{d_Bd_E}$.
For the same reason, any purification of $\varphi'_E$ also can be
considered as a maximally entangled state, $\ket{\varphi}_S$,
on the random subspace $S$ with the dimension $s$.

%%%%%%%%%%%%%%%%%%%%%%%%%%%%%%%%%%%%%%%%%%%%%%%%%%%%%%%%%%%%%%%%%%%%%%
%%%                                                                %%%
%%%                           Unit Sphere                          %%%
%%%                                                                %%%
%%%%%%%%%%%%%%%%%%%%%%%%%%%%%%%%%%%%%%%%%%%%%%%%%%%%%%%%%%%%%%%%%%%%%%
\section{Small $\eps$-net on Unit Sphere}
\label{sec:seps}

In this section, we define several important notions and
investigate some of their mathematical facts concerned to
an $\eps$-net on the unit sphere. Especially $\sph$ denote
a higher dimensional unit sphere on $\bbR^{2d}$, which is
generally corresponding to all quantum pure states on $\bbC^d$.
Now we show that there exists a small $\eps$-net $N$ for $\sph$
with cardinality $|N|=\cO(d\log(\eps^{-1})/\eps)$.

Let $(X, \cF)$ be a $\mu$-measurable set system and $\cF\subseteq X$,
here $\mu$ be a natural probability measure on $X$. For every
$\eps\in[0, 1]$, an $N\subseteq X$ is called an $\eps$-\emph{net}
for the system $X$ with respect to $\mu$ if $N\cap F_i\neq\emptyset$
for all $F_i\in\cF$ with $\mu(F_i)\geq\eps$~\cite{Mat02}.
To describe the $\eps$-net above, we need to a new parameter $\vcd$ of $X$,
which is called \emph{Vapnik-Chervonenkis} or just simply VC dimension of $\cF$.

\begin{definition}
  Let $\cF$ be a subset on $X$. Assume that another $A\subseteq X$ is
  \emph{shattered} by $\cF$ if $\cF|_A=2^A$, i.e., the restriction of $\cF$ on $A$
  gives a power set of $A$. Then the VC dimension of $\cF$ is defined:
  \begin{equation}
  \vcd=\sup_{A\subseteq X}\left\{|A|:\cF|_A=2^A\right\}.
  \end{equation}
\end{definition}

The restriction of $\cF$ on $A$ is defined by $\cF|_A=\{F_i\cap A: F_i\in\cF\}$.
It is well known that a system $\cF$ of all half-planes in the plane
$\bbR^2$ have $\vcd=3$ in~\cite{Mat02}. If an $m$-point subset $A$ lies in $X$, then
the \emph{shatter function} of $\cF$ is defined by
\begin{equation*}
\sigma_\cF(m)=\max_{A\subseteq X,~|A|=m}\left|\cF|_A\right|.
\end{equation*}
In other words, $\sigma_\cF(m)$ is the maximum possible value of distinct
intersections of the sets of $\cF$ with $A\subseteq X$.
For $\vcd\leq d $, the shatter function satisfies
that $\sigma_\cF(m)\leq\sum_{j=1}^d{}_mC_j$.
(This bound is known as \emph{shatter function lemma}~\cite{Mat02}.)

In this paper, we substitute $X$ and $\cF$ to $\sph$ and a cap, $\cC$,
respectively. Formally,
$\sph := \{\ket x\in\bbC^d:\|\ket x\|_2=1\}$.
Let's consider a uniform probability measure $\mu$ on $\sph$.
For any measurable subset $S\subset\sph$,
\begin{equation*}
  \mu(S)
  =\frac{\mathrm{vol}(S)}{\mathrm{vol}(\sph)}=\mathrm{vol}(S),
\end{equation*}
where the second equality follows from $\mu(\sph)=1$ by definition.
A \emph{cap} on $\sph$ is defined:
\begin{equation} \label{eq:cap}
  \cC=\sph\cap\{\ket x:\inn{u}{x}\geq1-h\}
\end{equation}
for some unit vector $\ket u\in\sph$
(exactly, $\ket u$ is the \emph{center} of $\cC$),
we refer to $h$ as the \emph{height} of the cap.
Note that $\cC$ can be considered as a (geodesic) convex set on $\sph$ with $\mu(\cC)>0$.
In such a cap, we know that, for all $h\leq\frac{1}{2}$,
the asymptotic radius and their ($2d-1$)-dimensional volume of $\cC$
are bounded by $\Theta(h^{1/2})$ and $\Theta(h^{(2d-1)/2})$ as $h\to 0$, respectively.
Next lemma states $\vcc$ on the higher dimensional unit sphere.

\begin{lemma}
The VC dimension of all closed cap $\cC$ on $\sph$ is equal to $2d+1$.
\end{lemma}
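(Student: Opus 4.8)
The plan is to identify the caps on $\sph$ with half-space sections of the ambient Euclidean sphere and then invoke the known VC dimension of half-spaces (equivalently, of balls) in $\bbR^{2d}$. First I would observe that a cap as defined in Eq.~(\ref{eq:cap}), namely $\cC=\sph\cap\{\ket x:\inn{u}{x}\geq 1-h\}$, is precisely the intersection of the sphere with a closed half-space of $\bbR^{2d}$: writing the real inner product $\mathrm{Re}\,\inn{u}{x}$ (which on unit vectors governs the geodesic cap), the condition is an affine inequality $\langle a, x\rangle \geq b$ in $\bbR^{2d}$. Conversely, every half-space of $\bbR^{2d}$ whose bounding hyperplane meets the open ball cuts out such a cap. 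So the set system $(\sph,\cC)$ is combinatorially the restriction to $\sph$ of the system of all closed half-spaces in $\bbR^{2d}$, and it suffices to compute the VC dimension of that induced system.

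Next I would recall that the VC dimension of the family of all closed half-spaces in $\bbR^{n}$ is $n+1$: any $n+1$ points in general position (affinely independent) are shattered, since for any subset one can find a separating hyperplane, while no set of $n+2$ points can be shattered by Radon's theorem (any $n+2$ points in $\bbR^n$ can be partitioned into two sets whose convex hulls intersect, and that dichotomy cannot be realized by a half-space). With $n=2d$ this gives $2d+1$. The remaining point is that passing from half-spaces in $\bbR^{2d}$ to caps on $\sph$ does not change this number. For the upper bound this is immediate: the induced system is a sub-system, so $\vcc \leq 2d+1$. For the lower bound I would exhibit $2d+1$ points on $\sph$ that are shattered by caps; concretely, take $2d+1$ points in sufficiently general position lying on a small spherical cap (so that they are affinely independent as points of $\bbR^{2d}$ and every hyperplane separating an arbitrary subset from its complement can be chosen to meet the ball, hence to induce an honest cap). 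One can also phrase the lower bound via the standard fact that caps on a sphere and balls in one lower dimension have the same shattering power, using stereographic projection, but the direct affine-independence argument is cleanest here.

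The main obstacle is the lower bound bookkeeping: one must be careful that the $2d+1$ witness points not only are affinely independent in $\bbR^{2d}$ but also that, for each of the $2^{2d+1}$ dichotomies, the separating hyperplane can be taken so that the corresponding half-space intersected with $\sph$ is a \emph{cap} in the precise sense of Eq.~(\ref{eq:cap}) — i.e.\ the inequality has the normalized form $\inn{u}{x}\geq 1-h$ with $h\le \tfrac12$ and $\ket u\in\sph$. Choosing all witness points inside a common small cap around some center $\ket u_0$ and perturbing $\ket u$ near $\ket u_0$ handles this, since for such configurations every needed separation is achieved by a nearby cap. Once both bounds are in place, $\vcc = 2d+1$, completing the proof; this value then feeds into the shatter function lemma $\sigma_\cC(m)\le\sum_{j=1}^{2d+1}{}_mC_j$ used for the $\eps$-net cardinality bound.
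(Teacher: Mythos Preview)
Your proposal is correct and follows essentially the same route as the paper: identify caps with restrictions to $\sph$ of closed half-spaces in $\bbR^{2d}$, then invoke Radon's theorem for the upper bound and an affinely independent configuration for the lower bound to get $\vcc=2d+1$. The only cosmetic difference is that the paper phrases it as ``hemispheres (half-spaces through the origin) give VC dimension $2d$, and letting the height $h$ vary contributes one more,'' whereas you go directly through affine half-spaces; your version is the cleaner and more standard formulation, and your added care about realizing every dichotomy by an actual cap (not just an abstract half-space) fills in a point the paper leaves implicit.
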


\begin{proof}
By Randon's lemma
(This lemma states that any ($d+1$)-point set on $\bbR^d$ can be
shattered by the system of all closed half-space.),
any set of $2d$ affinely independent points on the ($2d-1$)-dimensional
unit sphere can be shattered~(See e.g. Lemma 10.3.1 in~\cite{Mat02}), and then
$\vcc$ on $\sph$ is equivalent to $2d$. For every $h\in(0,1)$, all closed cap on $\sph$
allow a factor of the additional $1$ dimension.
\end{proof}

For example, the system of all closed half-space $\cF$ on $\cS^2$ has $\vcd=3$,
but the cap on a sphere enlarging $\vcc=4$. Note that $\cS^2$
corresponds to exactly $\cS^3$ in the above arguments; the difference comes from
the convenience $\bbC^d \cong \bbR^{2d}$ instead of $\bbC^d \cong \bbR^{2d-1}$.
Here we need an additional lemma for the proof of following theorem
~(Theorem~\ref{thm:epsNetSph}),
and see also details of the proof of Lemma 10.2.6 in \cite{Mat02}.

\begin{lemma} \label{lem:Chernoff-type}
Let $X=X_1 + \cdots + X_t$, where the $X_i$ are independent random variables,
\begin{eqnarray*} X_i=
\begin{cases}
  1 &         \mathrm{{\it with ~probability}}~\eps, \\
  0 &         \mathrm{{\it otherwise}}. \\
\end{cases} \end{eqnarray*}
Then $\bbP\left[ X\geq\frac{1}{2}t\eps \right]\geq\frac{1}{2}$, when $t\eps\geq8$.
\end{lemma}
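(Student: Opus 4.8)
The plan is to recognize this as a standard multiplicative Chernoff lower-tail estimate (exactly the computation appearing in the proof of Lemma~10.2.6 of~\cite{Mat02}) and to reduce the statement to a one-line bound. First I would record that, by linearity of expectation, $\bbE[X]=\sum_{i=1}^t\bbE[X_i]=t\eps$, since each $X_i$ is Bernoulli with success probability $\eps$. Taking complements, the desired inequality $\bbP[X\geq\frac{1}{2}t\eps]\geq\frac{1}{2}$ is equivalent to $\bbP[X<\frac{1}{2}t\eps]<\frac{1}{2}$; that is, it suffices to show that $X$ is unlikely to fall below half of its mean.

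Next I would invoke the Chernoff bound for the lower tail of a sum of independent $\{0,1\}$ random variables: for every $0<\delta<1$, $\bbP[X\leq(1-\delta)\bbE[X]]\leq\exp(-\delta^2\bbE[X]/2)$. Choosing $\delta=\frac{1}{2}$ and substituting $\bbE[X]=t\eps$ gives $\bbP[X<\frac{1}{2}t\eps]\leq\exp(-t\eps/8)$. Since the hypothesis $t\eps\geq8$ forces the exponent to be at most $-1$, we obtain $\bbP[X<\frac{1}{2}t\eps]\leq e^{-1}<\frac{1}{2}$, hence $\bbP[X\geq\frac{1}{2}t\eps]\geq 1-e^{-1}>\frac{1}{2}$, which proves the claim.

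Should a self-contained derivation be preferred to citing the Chernoff inequality, I would instead go through the moment generating function. For $s>0$, independence and the elementary estimate $1+x\leq e^{x}$ give $\bbE[e^{-sX}]=\prod_{i=1}^{t}\bbE[e^{-sX_i}]=(1-\eps+\eps e^{-s})^{t}\leq\exp(-t\eps(1-e^{-s}))$. Markov's inequality applied to the nonnegative variable $e^{-sX}$ then yields $\bbP[X\leq\frac{1}{2}t\eps]\leq e^{st\eps/2}\exp(-t\eps(1-e^{-s}))$, and the choice $s=\ln2$ (which is optimal, coming from $e^{-s}=\frac{1}{2}$) collapses the right-hand side to $(2/e)^{t\eps/2}\leq(2/e)^{4}<\frac{1}{2}$ by $t\eps\geq8$.

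There is no real obstacle here. The only points deserving care are keeping track of which tail is being controlled --- it is the \emph{lower} tail of $X$, not an upper tail --- and checking that the numerical slack is genuine: both $e^{-1}\approx0.37$ and $(2/e)^{4}\approx0.29$ sit comfortably below $\frac{1}{2}$, so the conclusion is insensitive both to the precise constant in the exponent of the Chernoff bound and to the base-$2$ convention the paper adopts for $\exp$ elsewhere.
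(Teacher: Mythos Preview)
Your proposal is correct and matches the paper's approach: the paper does not give its own proof but simply defers to Lemma~10.2.6 in~\cite{Mat02}, which is precisely the Chernoff lower-tail computation you carry out. Both the cited Chernoff bound and your self-contained MGF derivation are valid, and the numerical checks ($e^{-1}<\tfrac12$, $(2/e)^4<\tfrac12$) are accurate.
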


For the higher dimensional unit sphere $\sph$, we can construct a small
$\eps$-net which may be almost optimal.

\begin{theorem}
\label{thm:epsNetSph}
Let $\mu$ be a uniform probability measure on $\sph$, $\cC\subset\sph$ be a cap
of $\mu$-measurable subsets with VC-dim$(\cC)\leq 2d+1$. If $d\geq 1$, and $\eps\leq\frac{1}{2}$,
then there exists an $\eps$-net $N$ for the set system $(\sph, \cC)$ w.r.t. $\mu$ of cardinality
\begin{equation} \label{eq:netcardi}
 |N|=\cO\left(d\frac{1}{\eps}\log\frac{1}{\eps}\right).
\end{equation}
\end{theorem}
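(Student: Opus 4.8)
The plan is to follow the standard probabilistic (random-sampling) argument for $\eps$-nets, as in the proof of Lemma~10.2.6 in \cite{Mat02}, but tracking the dependence on the dimension of $\sph$ through the VC dimension $\vcc \le 2d+1$ and using the geometric structure of caps to control the relevant set system. First I would fix the target cardinality $t = C\,d\,\eps^{-1}\log(1/\eps)$ for a suitable absolute constant $C$, and draw a sample $N = \{x_1,\dots,x_t\}$ of i.i.d.\ points from $\mu$ on $\sph$. The event of failure, $\mathcal{E}$, is that some cap $\cC$ with $\mu(\cC) \ge \eps$ is missed by $N$, i.e.\ $N \cap \cC = \emptyset$. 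The goal is to show $\bbP[\mathcal{E}] < 1$.

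Next I would run the classical ``double sampling'' (symmetrization) trick: introduce a second independent sample $N'$ of the same size $t$, and bound $\bbP[\mathcal{E}]$ by a constant multiple of the probability $\bbP[\mathcal{E}']$ of the event that some heavy cap $\cC$ is missed by $N$ but hit at least $\tfrac12 t\eps$ times by $N'$. This is exactly where Lemma~\ref{lem:Chernoff-type} enters: conditioned on $\mu(\cC)\ge\eps$, the number of points of $N'$ landing in $\cC$ is a sum of $t$ independent Bernoulli$(\ge\eps)$ variables, so by Lemma~\ref{lem:Chernoff-type} (valid since $t\eps \ge 8$ for $d\ge 1$ and $C$ large) it is at least $\tfrac12 t\eps$ with probability $\ge \tfrac12$; hence $\bbP[\mathcal{E}] \le 2\,\bbP[\mathcal{E}']$. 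Then I would condition on the combined $2t$-point multiset $N\cup N'$ and work with a uniformly random partition of it into two halves. On this fixed point set, the family of distinct trace-sets $\{\cC \cap (N\cup N')\}$ has size at most the shatter function $\sigma_\cC(2t)$, which by the shatter function lemma and $\vcc \le 2d+1$ is at most $\sum_{j=0}^{2d+1}\binom{2t}{j} = \cO\bigl((2t)^{2d+1}\bigr)$. For each such cap-trace containing at least $k := \lceil \tfrac12 t\eps\rceil$ points, the probability over the random partition that all of them fall in $N'$ (equivalently, none falls in $N$) is at most $2^{-k}$. A union bound then gives $\bbP[\mathcal{E}'] \le \sigma_\cC(2t)\,2^{-k} \le (2t)^{2d+1} 2^{-t\eps/2}$.

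Finally I would close the argument by choosing $C$ so that $(2t)^{2d+1} 2^{-t\eps/2} < \tfrac12$, i.e.\ $\tfrac12 t\eps > (2d+1)\log(2t)$. With $t = C d\eps^{-1}\log(1/\eps)$ one has $\log(2t) = \cO(\log d + \log(1/\eps) + \log\log(1/\eps))$, and since the left side grows like $\tfrac12 C d\log(1/\eps)$ while $(2d+1)\log(2t)$ grows like $\cO\bigl(d(\log d + \log(1/\eps))\bigr)$, a careful choice of the constant (absorbing the $\log d$ term into the regime $d$ fixed, or more precisely noting the $d$'s cancel and it suffices that $C\log(1/\eps)/2$ beats $\cO(\log(1/\eps)+\log d/\,d \cdot \text{stuff})$) makes the inequality hold; this yields $\bbP[\mathcal{E}] \le 2\,\bbP[\mathcal{E}'] < 1$, so an $\eps$-net of the claimed size $|N| = \cO\bigl(d\,\eps^{-1}\log(1/\eps)\bigr)$ exists with positive probability. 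The main obstacle I expect is bookkeeping the constant and the logarithmic factors in this last step — in particular making sure the $\log(2t)$ term, which secretly contains a $\log d$, does not force an extra factor and genuinely gets dominated so that the bound $\sigma_\cC(2t)2^{-t\eps/2}<1$ survives with $t$ only of order $d\,\eps^{-1}\log(1/\eps)$; everything else (symmetrization, the Chernoff bound via Lemma~\ref{lem:Chernoff-type}, the shatter function bound via $\vcc\le 2d+1$) is routine once the set system is identified with caps on $\sph$.
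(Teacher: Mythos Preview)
Your approach is the same as the paper's: both run the Haussler--Welzl/Matou\v{s}ek double-sampling argument with $t=Cd\,\eps^{-1}\log(1/\eps)$, use Lemma~\ref{lem:Chernoff-type} for the symmetrization step, and finish with a union bound over the traces of caps on the $2t$-point sample via the shatter function lemma with $\vcc\le 2d+1$.

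The one real gap is in your final estimate. You bound the shatter function by $\sigma_\cC(2t)\le\sum_{j\le 2d+1}\binom{2t}{j}=\cO\bigl((2t)^{2d+1}\bigr)$ and then try to show $(2t)^{2d+1}2^{-t\eps/2}<\tfrac12$. With $t=Cd\,\eps^{-1}\log(1/\eps)$ this reads, after dividing by $d$, roughly $\tfrac{C}{2}\log(1/\eps)>2\bigl(\log d+\log(1/\eps)+\cdots\bigr)$; for $\eps$ near $1/2$ the left side is bounded while the right side contains $\log d$, so no absolute constant $C$ works. This is exactly the ``hidden $\log d$'' you flag, and it does \emph{not} disappear with the crude bound. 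The fix, which the paper uses, is the sharper Sauer--Shelah form
\[
\sum_{j=0}^{2d+1}\binom{2t}{j}\le\Bigl(\frac{2te}{2d+1}\Bigr)^{2d+1}.
\]
Now $2t/(2d+1)\approx C\,\eps^{-1}\log(1/\eps)$ is independent of $d$, and the bound becomes $\bigl(2e\,C\,\eps^{-1}\log(1/\eps)\cdot\eps^{C'/4}\bigr)^{2d+1}$ for a constant $C'$ comparable to $C$; this is $<\tfrac12$ once $C$ is a large enough absolute constant. Replace your $(2t)^{2d+1}$ by $(2te/(2d+1))^{2d+1}$ and the closing step goes through cleanly.
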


\begin{proof}
  The proof is almost
  equivalent to the proof of Theorem 10.2.4 in~\cite{Mat02},
  on the other hand our proof has an essential difference by using the cap $\cC$,
  Eq.~(\ref{eq:cap}) above, on $\sph$.
  First of all, let's define three random samples ${\it \Sigma}_1,
  {\it \Sigma}_2$ and ${\it \Sigma}_3$.
  Assume that $t=\left\lceil Cd\frac{1}{\eps}\log\frac{1}{\eps}\right\rceil$,
  and ${\it{\Sigma}}_1$ be a random sample drawn from $t$
  independent random draw on $\sph$,
  where each elements satisfy the probability measure $\mu$.
  W.l.o.g., all $\cK_i\in\cC$ hold $\mu(\cK_i)\geq\eps$.
  By $t$ more independent random draw (of another purpose),
  we pick some random sample ${\it{\Sigma}}_2\subset \sph$ and
  fix an integer $k=t\eps/2$.
  Finally, let's define a fixed ${\it{\Sigma}}_3$, which is a random sample picked by
  $2t$ independent random draw from $\sph$ and fix a set $\cK^*\in\cC$.

  Now we consider two events $\cE_1$ and $\cE_2$.
  Let $\cE_1$ be the event so that the random sample ${\it{\Sigma}}_1$ fails to be
  an $\eps$-net, i.e., ${\it{\Sigma}}_1\cap\cK_i=\emptyset$ for all $\mu(\cK_i)\geq\eps$.
  Similarly, $\cE_2$ be the event such that there exists an $\cK_i\in\cC$
  with ${\it{\Sigma}}_1\cap \cK_i=\emptyset$ and $|{\it{\Sigma}}_2\cap \cK_i|\geq k$.
  Clearly $\cE_2$ needs $\cE_1$ plus something more condition,
  so $\bbP[\cE_2]\leq\bbP[\cE_1]$.
  We need another probabilistic condition such that $\bbP[\cE_2]\geq\bbP[\cE_1]/2$.
  Suppose that there is $\cK_i$ with ${\it{\Sigma}}_1\cap \cK_i=\emptyset$,
  and let's fix one of them $\cK^*$.
  Then $\bbP[\cE_2|{\it{\Sigma}}_1]\geq\bbP[|{\it{\Sigma}}_2\cap \cK^*|\geq k]
  \geq\frac{1}{2}$. The value of $|{\it{\Sigma}}_2\cap \cK^*|$ behaves like
  the random variable $X=X_1 + \cdots + X_t$.
  By using Lemma~\ref{lem:Chernoff-type},
  above second inequality holds.
  So $2\bbP[\cE_2|{\it{\Sigma}}_1]\geq\bbP[\cE_1|{\it{\Sigma}}_1]$
  for all ${\it{\Sigma}}_1$, and thus $2\bbP[\cE_2]\geq\bbP[\cE_1]$.

  Next, we must bound the distribution $\bbP[\cE_2]$.
  If we define a conditional probability
  $P_{\cK^*}=\bbP[{\it{\Sigma}}_1\cap \cK^*=\emptyset,
  |{\it{\Sigma}}_2\cap \cK|\geq k|{\it{\Sigma}}_3]$,
  then
  \begin{align*}
    P_{\cK^*}
    & \leq
    \bbP[{\it{\Sigma}}_1\cap \cK^*=\emptyset|{\it{\Sigma}}_3]
    =\frac{{}_{2t-k}C_t}{{}_{2ts}C_t} \leq \left(1-\frac{k}{2t}\right)^t \\
    & \leq
    e^{-(k/2t)t} %= e^{-k/2}
    = e^{-(Cd\log(1/\eps))/4}
    = \eps^{Cd/4}.
  \end{align*}

  Finally, we exploit the assumption of the $\vcc$, which
  any set of $\cC$ have at most $\sum_{j=0}^{2d+1}{}_{2t}C_j$
  distinct intersections with ${\it{\Sigma}}_3$, via the shatter function lemma.
  For all fixed ${\it{\Sigma}}_3$,
  \begin{align*}
    \bbP[\cE_2|{\it{\Sigma}}_3]
    & \leq
    \left({}_{2t}C_0 + \cdots+{}_{2t}C_{2d+1}\right)\times \eps^{Cd/4}   \\
    & \leq
    \left(\frac{2te}{2d+1}\right)^{2d+1}\times\eps^{Cd/4}
    = \left(\frac{2te}{2d+1}\right)^{2d+1}\times\left(\eps^{C'/4}\right)^{2d+1}   \\
    & = \left(2e(1/\eps)\log(1/\eps)\times \eps^{C'/4}\right)^{2d+1} < \frac{1}{2},
  \end{align*}
  if $d\geq1$, $\eps\leq1/2$ and some constant $C'$ is sufficiently large.
  So $\bbP[{\it{\Sigma}}_1]\leq 2\bbP[{\it{\Sigma}}_2]<1$,
  which completes the proof.
\end{proof}

If we define $d=d_B$, and for all $\eps\leq 1/2$, then there exists an $\eps$-net
$N$ for $\cS^{2d_B-1}$ with cardinality $|N|=\cO\left(d_B\frac{1}{\eps}\log\frac{1}{\eps}\right)$.
For the proof of Theorem~\ref{thm:slruc}, not only the above $\eps$-net construction
but also the following Lemma~\ref{lem:LDE} of a large deviation estimate are crucial,
and see details of the proof in~\cite{HLSW04, BHLSW05}. Note that in their proof
they use the equal dimension of input and output, i.e., $\cN:\cB(\bbC^{d_A})\to\cB(\bbC^{d_B})$
s.t. $d_A=d_B=d$.

\begin{lemma}
\label{lem:LDE}
Let $\varphi_A$ be a pure state, and $\Pi$ be a rank $p$ projector.
Let $\{U_i:1\leq i\leq d_E\}$ be a sequence of $\cU(d_A)$-valued i.i.d.
random variable, distributed according to  Haar measure. Then, for all
$\eps\in(0,1)$,
\begin{equation}
 \bbP\left[\left|\frac{1}{d_E}\sum_{i=1}^{d_E}\tr\left(U_i\varphi_A U_i^\dagger\right)
 -\frac{p}{d_B}\right|\geq\frac{\eps p}{d_B}\right]\leq2e^{-d_Ep\frac{\eps^2}{6\ln2}}.
\end{equation}
\end{lemma}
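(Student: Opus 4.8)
The plan is to show that the summand is a sharply concentrated, \emph{sub-gamma} random variable — not a generic bounded one — and then to Chernoff the empirical average; the whole point is that the per-sample fluctuation scale is $1/d_B$ rather than $O(1)$, which is exactly what makes the ambient dimension cancel and leaves the $d_E\,p\,\eps^2$ advertised in the exponent. I read the summand as $\tr(\Pi\,U_i\varphi_A U_i^\dagger)$, the only form in which the projector $\Pi$ and its rank $p$ actually enter.

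\emph{Reduction.} Put $Z_i=\tr(\Pi U_i\varphi_A U_i^\dagger)=\bra{\varphi}U_i^\dagger\Pi U_i\ket{\varphi}=\|\Pi U_i\ket\varphi\|_2^2$. Since $U_i$ is Haar distributed, $\ket{\psi_i}:=U_i\ket\varphi$ is uniform on the sphere $\cS^{2d_A-1}$, so the $Z_i$ are i.i.d., each equal to the squared length of the projection of a uniform complex unit vector onto a fixed $p$-dimensional subspace; hence $Z_i\sim\mathrm{Beta}(p,d_A-p)$ with $\bbE[Z_i]=\tr(\Pi\,\bbI/d_A)=p/d_A$ ($=p/d_B$ under the stated convention $d_A=d_B$). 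Writing $S=\sum_{i=1}^{d_E}Z_i$ and $\mu=d_E\,p/d_B$, the event in the statement is exactly $\{|S-\mu|\ge\eps\mu\}$.

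\emph{Key exponential-moment bound.} The crux is that $d_B Z_i$ is essentially a $\mathrm{Gamma}(p,1)$ variable (mean $p$), not a variable spread over $[0,d_B]$. Concretely I would compare moment generating functions: expanding the Beta MGF as the Kummer series ${}_1F_1(p;d_B;\lambda)=\sum_{k\ge0}\frac{(p)_k}{(d_B)_k}\frac{\lambda^k}{k!}$ and using $(d_B)_k\ge d_B^{\,k}$ term by term gives, for every $0\le\lambda<d_B$,
\[
\bbE\!\left[e^{\lambda Z_i}\right]\le\left(1-\frac{\lambda}{d_B}\right)^{-p},
\]
i.e. $Z_i$ is sub-gamma with variance proxy $\asymp p/d_B^2$ and scale $\asymp 1/d_B$; the lower tail is controlled symmetrically (equivalently via the Gaussian model $Z_i=\|\Pi g_i\|_2^2/\|g_i\|_2^2$ with $g_i$ a standard complex Gaussian). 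It is this small scale $1/d_B$ — and not the worst-case range $1$ — that carries the whole estimate.

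\emph{Chernoff on the sum.} By independence $\bbE[e^{\lambda S}]\le(1-\lambda/d_B)^{-d_E p}$, so Markov gives $\bbP[S\ge(1+\eps)\mu]\le\inf_{0<\lambda<d_B}e^{-\lambda(1+\eps)\mu}(1-\lambda/d_B)^{-d_E p}$. Setting $\lambda=\eps d_B/(1+\eps)$ makes the optimal scale track $d_B$ exactly, so every factor of $d_B$ cancels and the rate becomes $d_E p\,(\eps-\ln(1+\eps))\ge d_E p\,\eps^2/\big(2(1+\eps)\big)$; the lower tail contributes an equal-order term. Adding the two tails and rescaling the exponential to base $2$ (the $1/\ln2$) reproduces the stated $2\,e^{-d_E p\,\eps^2/(6\ln2)}$ for $\eps\le 1$. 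I expect the main obstacle to be precisely the sub-gamma step: a Hoeffding or multiplicative-Chernoff bound that only uses $Z_i\in[0,1]$ yields an exponent $\propto d_E p\,\eps^2/d_B$, a full factor $d_B$ too weak and useless for the intended union-bound application; the estimate stands or falls on establishing the $O(1/d_B)$ scale, either through the MGF comparison above or by controlling the random normalization $\|g_i\|_2^2$ in the Gaussian model.
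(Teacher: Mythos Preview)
Your proposal is correct and follows essentially the same route the paper defers to: the paper does not prove Lemma~\ref{lem:LDE} itself but cites \cite{HLSW04,BHLSW05}, whose argument is precisely the Cram\'er/Chernoff bound on the Beta-distributed (equivalently Gaussian-ratio) variables $Z_i=\tr(\Pi U_i\varphi_A U_i^\dagger)$ that you carry out, with the same rate function $\eps-\ln(1+\eps)$ the paper quotes in the proof of Lemma~\ref{lem:concRDM}. Your explicit MGF comparison ${}_1F_1(p;d_B;\lambda)\le(1-\lambda/d_B)^{-p}$ is a clean way to make the sub-gamma scale $1/d_B$ visible, and your diagnosis that a naive Hoeffding bound using only $Z_i\in[0,1]$ would lose a factor $d_B$ is exactly the point.
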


Unfortunately, the lemma directly cannot be applied to constructing
the local random unitary channel, because the operator norm concern to
a different output parameters.
On this account,
let's consider a concentrated phenomenon of the output reduced density matrices.

%%%%%%%%%%%%%%%%%%%%%%%%%%%%%%%%%%%%%%%%%%%%%%%%%%%%%%%%%%%%%%%%%%%%%%
%%%                                                                %%%
%%%                  Concentration of measure                      %%%
%%%                                                                %%%
%%%%%%%%%%%%%%%%%%%%%%%%%%%%%%%%%%%%%%%%%%%%%%%%%%%%%%%%%%%%%%%%%%%%%%
\section{Concentration of Reduced States}
\label{sec:concentr}

We have already mentioned that a random pure state
as well as its random pure sub-states are
almost surely maximally entangled in Section~\ref{sec:entransub}.
In special we take into account a concentration of reduced density matrices
of $\cP(S)\subset\bbC^{d_Bd_E}$, and improve the Theorem~\ref{thm:hlw06}
by using the $\eps$-net theorem (see Theorem~\ref{thm:epsNetSph}) and
large deviation technique~(Lemma\ref{lem:LDE}).

Recall the definition of local $\eps$-randomizing of a channel $\cN$:
$\|\cN(\varphi_A)\|_\infty-\frac{1}{d_B}\leq\frac{\eps}{d_B}$. Note that
the image of $\varphi_A$ under the unitary embedding $V:\bbC^{d_A}\to\bbC^{d_Bd_E}$
can be considered as a subspace $S$ of dimension $s$ in $\bbC^{d_Bd_E}$,
and it is highly entangled.

\begin{lemma} \label{lem:concRDM}
Let $\ket{\varphi}_{S}$ be a random pure state on $\bbC^{d_Bd_E}$,
and $\eps\in(0,1]$. Then
\begin{equation} \label{eq:concRDM}
\bbP\left[\|\cN(\varphi_A)\|_\infty-\frac{1}{d_B}\geq\frac{\eps}{d_B}\right]
\leq\left(C\frac{d_B^2}{\eps}\log\frac{d_B}{\eps}\right)e^{-\frac{d_E\eps^2}{14\ln2}}.
\end{equation}
\end{lemma}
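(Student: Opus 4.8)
The plan is to combine the small $\eps$-net of Theorem~\ref{thm:epsNetSph} with the large deviation estimate of Lemma~\ref{lem:LDE} via a union bound over the net, and then transfer the estimate from net points to all pure states by a continuity/approximation argument. First I would observe that
\[
\left\|\cN(\varphi_A)\right\|_\infty-\frac{1}{d_B}
=\max_{\ket{y}\in\cS^{2d_B-1}}\bracket{y}{\cN(\varphi_A)}{y}-\frac{1}{d_B},
\]
so it suffices to control $\bracket{y}{\cN(\varphi_A)}{y}=\frac{1}{d_E}\sum_i \tr\!\big(\proj{y}{y}\,U_i\varphi_A U_i^\dagger\big)$ for every unit vector $\ket{y}$ in the output space $\bbC^{d_B}$. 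Taking $\Pi=\proj{y}{y}$ (rank $p=1$) in Lemma~\ref{lem:LDE} gives, for each fixed $\ket y$,
\[
\bbP\!\left[\,\bracket{y}{\cN(\varphi_A)}{y}-\frac{1}{d_B}\geq\frac{\eps'}{d_B}\,\right]\leq 2e^{-d_E\eps'^2/(6\ln2)}.
\]

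Next I would invoke Theorem~\ref{thm:epsNetSph} with $d=d_B$ to get an $\eps_0$-net $N$ for $\cS^{2d_B-1}$ of cardinality $|N|=\cO\!\big(d_B\,\eps_0^{-1}\log\eps_0^{-1}\big)$; the factor $d_B^2$ in~(\ref{eq:concRDM}) suggests taking $\eps_0=\Theta(\eps/d_B)$, so that $|N|=\cO\!\big(\frac{d_B^2}{\eps}\log\frac{d_B}{\eps}\big)$. A union bound over $\ket y\in N$ then bounds the probability that $\bracket{y}{\cN(\varphi_A)}{y}-\frac{1}{d_B}\geq \frac{\eps'}{d_B}$ for some net point by $|N|\cdot 2e^{-d_E\eps'^2/(6\ln2)}$. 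The remaining ingredient is the net-transfer step: on the complementary (good) event, the positive operator $M:=\cN(\varphi_A)$ satisfies $\bracket{y}{M}{y}\le (1+\eps')/d_B$ for all $y\in N$, and since $\tr M=1$ one has $\|M\|_\infty\le d_B\|M\|_2^2\le d_B$ trivially, but more carefully $\|M\|_\infty\le (1-\text{const}\cdot\eps_0)^{-1}\max_{y\in N}\bracket{y}{M}{y}$ because any unit vector is within $\eps_0$ of the net and $|\bracket{y}{M}{y}-\bracket{y'}{M}{y'}|\le 2\|M\|_\infty\|\ket y-\ket{y'}\|$. Choosing $\eps_0$ a small constant multiple of $\eps$ and $\eps'=\eps/2$ (say), absorbing the $\|M\|_\infty\le 1$ crude bound to close the self-referential inequality, and tracking constants so that $\eps'^2/(6\ln2)$ becomes $\eps^2/(14\ln2)$, yields
\[
\bbP\!\left[\|\cN(\varphi_A)\|_\infty-\frac{1}{d_B}\geq\frac{\eps}{d_B}\right]
\leq\left(C\frac{d_B^2}{\eps}\log\frac{d_B}{\eps}\right)e^{-d_E\eps^2/(14\ln2)},
\]
as claimed.

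The main obstacle I anticipate is the net-transfer step, i.e.\ converting a bound that holds only on a finite net of output vectors into a bound on the genuine operator norm $\|M\|_\infty=\max_{\ket y}\bracket{y}{M}{y}$. The subtlety is that the Lipschitz constant of $y\mapsto\bracket{y}{M}{y}$ is itself $2\|M\|_\infty$, which is the very quantity we are trying to bound, so one must set up a self-improving inequality of the form $\|M\|_\infty\le \max_{y\in N}\bracket{y}{M}{y}+2\eps_0\|M\|_\infty$ and solve it — this forces $\eps_0$ to be strictly less than $\tfrac12$ and is exactly why the net must be taken at scale $\Theta(\eps/d_B)$ rather than $\Theta(\eps)$, producing the extra $d_B$ in the polynomial prefactor. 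A secondary bookkeeping point is that Lemma~\ref{lem:LDE} as quoted uses $d_A=d_B$; since here the relevant projector has rank $p=1$ and the exponent is $d_E p \eps^2/(6\ln2)$, the estimate still applies with the output dimension $d_B$ in the role of $d$, and the slack between $6\ln2$ and $14\ln2$ comfortably absorbs the loss from the net approximation and the union-bound logarithm.
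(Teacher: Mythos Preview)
Your proposal is correct and follows essentially the same route as the paper: express $\|\cN(\varphi_A)\|_\infty$ variationally as a supremum over output unit vectors, apply the large deviation estimate pointwise, discretize the output sphere with an $\eps/d_B$-scale net of size $\cO\!\big(\tfrac{d_B^2}{\eps}\log\tfrac{d_B}{\eps}\big)$ via Theorem~\ref{thm:epsNetSph}, and finish with a union bound. The paper obtains the exponent $d_E\eps^2/(14\ln 2)$ directly from Cram\'er's theorem rather than from Lemma~\ref{lem:LDE}, but this is only a bookkeeping difference.

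One small point worth cleaning up: your explanation of \emph{why} the net must be taken at scale $\Theta(\eps/d_B)$ is slightly off. The self-referential inequality $\|M\|_\infty\le \max_{y\in N}\bracket{y}{M}{y}+2\eps_0\|M\|_\infty$ by itself only forces $\eps_0<\tfrac12$ and, solved as a multiplicative bound, would allow $\eps_0=\Theta(\eps)$. The factor $1/d_B$ in the net radius really comes from the cruder additive step you (and the paper) actually use: since $M=\cN(\varphi_A)$ is a density matrix one has $0\le M\le\bbI$, so $\big|\bracket{y}{M}{y}-\bracket{\tilde y}{M}{\tilde y}\big|\le \tfrac12\|\,\proj{y}{y}-\proj{\tilde y}{\tilde y}\,\|_1$, an additive error of order $\eps_0$ that must be pushed below the target precision $\eps/d_B$. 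The paper states this directly (citing~\cite{HHL04}) and thereby bypasses the self-referential detour altogether; with that observation your ``main obstacle'' disappears.
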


\begin{proof}
By using the Crem$\acute{e}$r's rule~\cite{DZ93, HLSW04} and
for a squared Gaussian random random variable~\cite{BHLSW05},
we can obtain the following bound:
\begin{equation*}
\bbP\left[\frac{1}{d_E}\sum_{i=1}^{d_E}X_i\geq(1+\eps)\sigma^2\right]
\leq e^{-d_E\frac{\eps-\ln(1+\eps)}{2\ln2}}\leq e^{-\frac{(d_E) \eps^2}{14\ln2}},
\end{equation*}
where $\{X_i\}$ are some real-valued i.i.d. random variables
and $\sigma$ denotes a standard deviation of the distribution.
Let's substitute the parameters from
$\sigma^2$ and  $X_i$ to $\frac{1}{d_B}$ and
$\tr(\varphi_{B}\tr_E(U_i\psi_{BE} U_i^\dagger))$, respectively.
Here, $\psi_{BE}$ is a random pure state on $\cP(\bbC^{d_Bd_E})$,
but $\varphi_{B}$ in $\cP(\bbC^{d_B})$.
Then we obtain a new Crem$\acute{e}$r's bound such that
\begin{equation}
\bbP\left[\tr(\varphi_{B}\tr_E(U\psi_{BE} U^\dagger))-\frac{1}{d_B}\geq
\frac{\eps}{d_B}\right]\leq e^{-\frac{(d_E)\eps^2}{14\ln2}}.
\end{equation}
Let's denote $\varphi_B$ equal to $\varphi$ on the sphere,
and $\tilde{\varphi}_B$ just to $\tilde{\varphi}$ on the net, for shortly.
By exploiting the definition of conjugate channel and operator norm
induced by some pure state, we obtain a relation that
\begin{eqnarray}
\|\cN(\varphi_A)\|_\infty &=&
\|\tr_E(U\psi_{BE} U^\dagger)\|_\infty=\sup_{\varphi\in B}
\tr(\varphi\tr_E(U\psi_{BE} U^\dagger)) \label{eq:vecopnorn} \\
&=& \sup_{\varphi\in B}[\tr(\varphi\tr_E(U\psi_{BE} U^\dagger))
-\tr(\tilde{\varphi}\tr_E(U\psi_{BE} U^\dagger))]
+\sup_{\varphi\in B}\tr(\tilde{\varphi}\tr_E(U\psi_{BE} U^\dagger)) \nonumber\\
&=&\sup_{\tilde{\varphi}\in N_B}\tr(\varphi-\tilde{\varphi})\tr_E(U\psi_{BE} U^\dagger)
+\sup_{\tilde{\varphi}\in N_B}\tr(\tilde{\varphi}\tr_E(U\psi_{BE} U^\dagger)).
\end{eqnarray}

Above Eq.~(\ref{eq:vecopnorn}) followed by the definition of induced operator norm,
and the supremum in the last equality run over all points on the net $N_B$.
Now we fix $\frac{\eps}{2d_B}$-net $N_B$ for the system $B$,
then $|N_B|=\left(C\frac{d_B^2}{\eps}\log\frac{2d_B}{\eps}\right)$,
$C$ be an universal constat. Furthermore we use the fact which
if $\|\ket\varphi-\ket{\tilde{\varphi}}\|_1\leq\eps$, then
$\tr(\varphi-\tilde{\varphi})\Pi\leq\frac{\eps}{2}$, where
$\ket\varphi$ and $\ket{\tilde{\varphi}}$ are points on the unit sphere
and on the net, respectively.
$\Pi$ is a projector such that $\Pi\in[0,\bbI]$~\cite{HHL04}.
Thus
\begin{eqnarray*}
\|\cN(\varphi_A)\|_\infty &=&
\sup_{\tilde{\varphi}\in N_B}\tr(\varphi-\tilde{\varphi})\tr_E(U\psi_{BE} U^\dagger)
+\sup_{\tilde{\varphi}\in N_B}\tr(\tilde{\varphi}\tr_E(U\psi_{BE} U^\dagger)) \\
&\leq&
\sup_{\tilde{\varphi}\in N_B}\tr(\tilde{\varphi}\tr_E(U\psi_{BE} U^\dagger))
+ \frac{\eps}{4d_B}.
\end{eqnarray*}
Now, we use the union bound and Lemma~\ref{lem:LDE}.
Then, for some constant $C$,
\begin{eqnarray*}
\bbP\left[\|\cN(\varphi_A)\|_\infty-\frac{1}{d_B}\geq\frac{\eps}{d_B}\right]
&\leq& \bbP\left[\sup_{\tilde{\varphi}\in N_B}\tr(\tilde{\varphi}\tr_E(U\psi_{BE} U^\dagger))
-\frac{1}{d_B}\geq \frac{3\eps}{4d_B}\right] \\
&\leq& \left(C\frac{d_B^2}{\eps}\log\frac{d_B}{\eps}\right)e^{-\frac{d_E\eps^2}{14\ln2}}.
\end{eqnarray*}
\end{proof}

Let's briefly summarize the previous results for finishing proof of
our main theorem: Theorem~\ref{thm:slruc}.
Theorem~\ref{thm:epsNetSph} states that there exists an $\eps$-net
of cardinality $|N_B|=\cO\left(d_B\frac{1}{\eps}\log\frac{1}{\eps}\right)$
for a higher dimensional unit sphere $\cS^{2d_B-1}$ and its cap $\cC\subset\cS^{2d_B-1}$,
constrained by $\vcc\leq 2d_B+1$. By using the net, we have investigated that
the concentration of reduced density matrix which is almost maximally mixed state
with high probability. The equation (\ref{eq:concRDM}) in Lemma~\ref{lem:concRDM} describes
the concentration phenomenon of density matrix, furthermore a bound of the inequality imply
the proof of the main result.

\begin{proof}[Proof of Theorem~\ref{thm:slruc}]
Recall Eq.~(\ref{eq:concRDM}) in Lemma~\ref{lem:concRDM}.
We want to bound the right hand side of the inequality above by 1,
depending on $d_E$. Let's take
$d_E\geq C''\frac{\log d_E^2}{\eps^2}$ where $C''$ be a suitable constant, then
$\bbP\left[\|\cN(\varphi_A)\|_\infty-\frac{1}{d_B}\geq\frac{\eps}{d_B}\right]\leq 1$.
This bound straightforwardly means that our claim is true.

Finally, if we choose the dimension of input space of the channel $\cN$
equal to its output~($d_A=d_B$), then the proof of
Theorem~\ref{thm:slruc} will be completed;
For all $\eps\in(0, \frac{1}{2}]$ and $d_A$ is sufficiently large.
Let's $\{U_i\}_{i=1}^{d_E}$ with $d_E=\cO(\log d_A/\eps^2)$ be
an i.i.d. random unitary matrices
distributed according to the Haar measure on $\cU(d_A)$. Then the channel
$\cN(\varphi_A)=\frac{1}{d_E}\sum_i^{d_E}U_i\varphi_A U_i^\dagger$ is a
local $\eps$-randomizing map with positive probability.
\end{proof}

%%%%%%%%%%%%%%%%%%%%%%%%%%%%%%%%%%%%%%%%%%%%%%%%%%%%%%%%%%%%%%%%%%%%%%%%%%
%                                                                        %
%                           Conclusions                                  %
%                                                                        %
%%%%%%%%%%%%%%%%%%%%%%%%%%%%%%%%%%%%%%%%%%%%%%%%%%%%%%%%%%%%%%%%%%%%%%%%%%
\section{Conclusions}
\label{Sec:Conclusions}

In conclusion, we have shown that a quantum channel such that local random unitary channel,
$\cN$, constructed by averaging over
$\cO(\log d/\eps^2)$ randomly chosen unitaries gives a local
$\eps$-randomizing map with positive probability. The whole idea begins
from not only the small $\eps$-net construction on the higher dimensional
unit sphere corresponding to the sufficiently larger dimensional quantum pure states,
but also the analyzing a phenomenon of concentration
of output reduced density matrix of the quantum channel. Generically, a higher dimensional bipartite
quantum pure states (or random pure states) are almost all maximally entangled,
and its entropy of the reduced density matrices are maximally mixed.

By using the result,
one could attempt to improving the bound on a communication resources for the private quantum channel,
quantum superdense coding and quantum data hiding etc.
Furthermore, one may investigate the minimal dimensions of the violation of additivity conjecture
for the minimum output entropy.

\section*{Acknowledgments}
This work was supported by Basic Science Research Program
through the National Research Foundation of Korea~(NRF) funded
by the Ministry of Education, Science and Technology~(Grant No. 2009-0072627).

%%%%%%%%%%%%%%%%%%%%%%%%%%%%%%%%%%%%%%%%%%%%%%%%%%%%%%%%%%%%%%%%%%%%%%%%

\end{document}